\documentclass[letterpaper, 10 pt, conference]{ieeeconf}  % Comment this line out
\IEEEoverridecommandlockouts                              % This command is only
\usepackage{graphicx}
\usepackage{amsmath}
\usepackage{amssymb}
\usepackage{comment}

\usepackage[ruled]{algorithm}
\usepackage{algpseudocode}
\usepackage{hyperref}
\usepackage{standalone}
\usepackage{tikz}
\usetikzlibrary{automata, positioning, arrows.meta, fit, backgrounds}
\tikzset{every edge/.append style={font=\Large},every path/.append style={font=\Large}, }
\usepackage{circuitikz}
\usepackage{siunitx}
  
\usepackage{amsthm}
\algnewcommand\algorithmicinput{\textbf{Input:}}
\algnewcommand\INPUT{\item[\algorithmicinput]}
\newtheorem{proposition}{Proposition}
\newtheorem{definition}{Definition}
\newtheorem*{problem}{Problem}

\theoremstyle{remark}
\newtheorem*{remark}{Remark}

\title{\LARGE \bf
Verification of Autonomous Systems with Optimal Controllers
}

\author{Dylan Le\(^{\dagger}\)%
\thanks{This work has been submitted to the IEEE for possible publication. Copyright may be transferred without notice, after which this version may no longer be accessible.}%
\thanks{
The authors are with the Department of Computer Science, Rensselaer Polytechnic Institute, 110 8th St, Troy, NY 12180, USA},
Joel McCandless\(^{\dagger}\)\thanks{\(\dagger\) These authors contributed equally to this work.},
Carlos A. Varela,
Radoslav Ivanov% <-this % stops a space
}

\usepackage{letltxmacro}
\usepackage{subcaption}
\usepackage{graphicx}

\LetLtxMacro\Oldfootnote\footnote

\newcommand{\EnableFootNotes}{%
  \LetLtxMacro\footnote\Oldfootnote%
}

\newcommand{\DisableFootNotes}{%
  \renewcommand{\footnote}[2][]{\relax}
}
\begin{document}
\DisableFootNotes
\EnableFootNotes

\maketitle
\thispagestyle{empty}
\pagestyle{empty}

%%%%%%%%%%%%%%%%%%%%%%%%%%%%%%%%%%%%%%%%%%%%%%%%%%%%%%%%%%%%%%%%%%%%%%%%%%%%%%%%
\begin{abstract}
This paper considers the problem of reachability analysis of control systems with
optimal controllers, as a first step towards verifying the safety and correctness
of such systems. Despite their appeal in guaranteeing task satisfaction through
cost minimization, optimal controllers are often challenging to assure. In
particular, as system dynamics grow in complexity, solving the resulting
optimization problem may be difficult, especially given time and computation
constraints on real platforms. Thus, it is essential to verify that, even if the
optimal solution is not always found, such controllers still accomplish the
high-level control objective. In this paper, we focus on gradient descent
algorithms and design a reachability algorithm by treating gradient descent as a
separate (digital) dynamical system, embedded in the original (physical) dynamical
system, with controls as part of the state. We evaluate the feasibility of the
proposed method on two control systems, a two-dimensional quadrotor and a cartpole.
\end{abstract}

%%%%%%%%%%%%%%%%%%%%%%%%%%%%%%%%%%%%%%%%%%%%%%%%%%%%%%%%%%%%%%%%%%%%%%%%%%%%%%%%
%%%%%%%%%%%%%%%%%%%%%%%%%%%%%%%%%%%%%%%%%%%%%%%%%%%%%%%%%%%%%%%%%%%%%%%%%%%%%%%%
%%%%%%%%%%%%%%%%%%%%%%%%%%%%%%%%%%%%%%%%%%%%%%%%%%%%%%%%%%%%%%%%%%%%%%%%%%%%%%%%

\section{Introduction}
\label{sec:introduction}
Modern autonomous systems, such as self-driving cars, air taxis, and food delivery robots~\cite{waymo, volocity, serveRobotics2025}, are increasingly sophisticated, both in terms of the tasks they can accomplish and the hardware they use. This development underlines the need for powerful optimal control algorithms that can execute long-horizon tasks, planned over complex, and possibly data-driven, system models. In fact, optimal control has been used in a number of high-profile applications involving data-driven models, including quadrotor landing~\cite{shi2019neural}, quadruped control~\cite{xue2024full} and autonomous drifting~\cite{djeumou2023autonomous}.

Despite their great promise, however, optimal controllers are challenging to assure, which is an essential requirement for many of the safety-critical systems described above. In particular, complex tasks specifications and system dynamics result in large, non-convex optimization problems that are difficult to solve optimally, especially given computation constraints at runtime. As a practical solution, developers often utilize approximations such as gradient descent~\cite{shi2019neural} or sampling-based methods~\cite{xue2024full}, which provide fast performance at the expense of optimality.

In this paper, we aim to design a verification approach for autonomous systems with optimal controllers. As a first step, we focus on approaches based on gradient descent, which have proven to be very effective and are easier to analyze than more sophisticated techniques such as second-order or relaxation-based methods.

Verification of autonomous systems has received a lot of attention over the last two decades~\cite{althoff21}. A number of reachability approaches have been proposed, based on different computationally-convenient approximations such as
ellipsoids~\cite{botchkarev00}, polytopes~\cite{chutinan03}, zonotopes~\cite{althoff10} or Taylor models (TMs)~\cite{makino03,chen12}. Stochastic~\cite{abate2007computational} and simulation-based~\cite{duggirala15} verification methods have also been developed, as well as methods based on satisfiability modulo theories~\cite{gao13,kong15}. Furthermore, a range of dynamical system classes have been considered, including linear~\cite{frehse2011spaceex}, non-linear~\cite{althoff15}, hybrid~\cite{chen12} and transition systems~\cite{baier2008principles}.

In terms of control algorithms, a large class of methods is based on Hamilton-Jacobi reachability, which aims to design controllers which are safe by design~\cite{bansal2017hamilton}. Moreover, a lot of work has been done on verifying safety of neural network controllers~\cite{ivanov19,huang19,dutta19,sun19,tran19}, including the authors' Verisig approach~\cite{ivanov19,ivanov20,ivanov2020b,ivanov2021verisig}. However, we know of no prior work on reachability analysis of gradient-descent-based control for non-linear systems.

Our main idea is to treat gradient descent as a (digital) dynamical system, embedded in the original (physical) dynamical system. This results in an augmented system, which contains both the physical and the digital states, i.e.,~future controls, used in gradient descent. While this approach is intuitively appealing, it introduces significant challenges in terms of (1)~the dimensionality of the system, (2)~the verification horizon (essentially the product of physical control loops and digital gradient descent loops), and (3)~the complexity of the functions involved in gradient calculations.

To address the above challenges, we propose a reachability method based on TMs. To alleviate the high-dimensional state space challenge, we propose a digital TM shrink wrapping approach that aims to reduce the approximation error in digital TMs. Specifically, we decouple physical and digital TMs through a new shrink wrapping and unwrapping iteration that allows us to treat the problem as two lower-dimensional problems. Furthermore, we also introduce a method for symbolic propagation of remainder terms in digital states, which reduces error growth over long horizons and complex function approximations.

We evaluate the proposed method on two control platforms, a two-dimensional quadrotor and a cartpole. These systems illustrate different reachability challenges, such as higher control dimensions and longer planning horizons. In both cases, we design optimal controllers that aim to take the system to the origin and verify that these systems converge to a box around the origin, for a range of initial conditions.

In summary, this paper makes three contributions: 1)~a verification paradigm for autonomous systems with optimal controllers; 2)~a reachability analysis approach for gradient descent controllers based on TMs; 3)~two case study evaluations on systems with optimal controllers.

\section{Problem Statement} \label{sec:problem_statement}
We consider a discrete-time system\footnote{
Although the dynamics in the case studies are in continuous time we can view them as discrete-time systems because the controls are time-triggered.
} where the state ${{x}_k~ \in~\mathbb{R}^d}$ at step \(k\) is determined by a known state evolution function $f:\mathbb{R}^d\times\mathbb{R}^c\to \mathbb{R}^d$
\begin{align}
{x}_{k+1} &= f({x}_k, {u}_k) &
    {u}_k &= h({x}_k)
\label{eq:system}
\end{align}
where \(u_k \in \mathbb{R}^c\) is the vector of optimal control
over a cost function \(J: \mathbb{R}^{c\times H}\times\mathbb{R}^d\to\mathbb{R}\) up-to some horizon \(H\). The optimal control \(u_k\) is found by minimizing \(J\):
\begin{equation}
    \textbf{h}({x}_k) = \arg\min_{{u}_k:{u}_{k+H-1}} J({u}_k:{u}_{k+H-1},{x}_k) \label{eq:opt_j}
\end{equation}
where $\textbf{h}(x_k) = [u_k, u_{k+1}, \dots, u_{k+H-1}]$ consists of the entire horizon of controls, such that \(h(x_k) = \textbf{h}_1(x_k)\).
A larger horizon \(H\) allows the cost function to account for longer-term effects, at the cost of a higher-dimensional optimization problem.
Note that the optimal control at time \(k\) may be dependent on the controls and states up-to \(k +H\).
In this work, we consider \(J\) to be the quadratic cost function
\begin{equation}
J({u}_k:{u}_{k+H-1},{x}_k)
    = \sum_{i=0}^{H-1}
{x}_{k+1+i}^{\intercal}\mathcal{Q}{x}_{k+1+i} + {u}_{k+i}^\intercal\mathcal{R}{u}_{k+i}\label{eq:lqr_cost}
\end{equation}
where \({x}_{k+1}:{x}_{k+H}\) can be found by propagating controls through \eqref{eq:system}.~\footnote{The cost function may also have discounting where \(\mathcal{Q}\) and \(\mathcal{R}\) may not be the same matrix for all \(i\in[0,H-1]\). We also note that \(u_{k+H}\) cannot contribute to the final state \(x\).}

In general, the optimization problem \eqref{eq:opt_j} over the cost function \eqref{eq:lqr_cost} may not have a closed-form solution due to non-linearity or non-convexity in \(f\).
A common strategy then is to use an optimization algorithm such as gradient descent to get an approximate solution.
We consider the fixed step-size \(\alpha\) gradient descent algorithm to optimize \eqref{eq:opt_j} with \(T\) gradient updates.~\footnote{In our running examples, our models use a decaying \(\alpha\) learning rate. This is helpful for designing controllers that are easier to verify.}
We define \(z_i\) to be iterate $i$ of gradient descent for the controls \(\mathbf{h}(x_k)\).
The gradient update for \(z_{i+1}\) is
\begin{equation}\label{eq:grad_update}
    z_{i+1} = z_{i} - \alpha\nabla_{z_i} J (\cdot)
\end{equation}
where we initialize \(z_{0}=\mathbf{0}^{c\times H}\) and the final gradient step \(T\) results in the controls \(\mathbf{h}({x}_k)= z_{T}\).
We use zero initialization for consistency and ease of verification; further discussion on this choice is provided in Section~\ref{sec:conclusion}.
We now state the problem considered in this paper.
\begin{problem}
    Given an initial set of states \(\mathcal{X}_0\) and a dynamical system with optimal control, as defined in (\ref{eq:system}--\ref{eq:lqr_cost}), the problem is to
    find tight over-approximations for the reachable sets \(\mathcal{X}_1,\mathcal{X}_2,\dots,\mathcal{X}_K\) for all times $k \in \{1, \dots, K\}$.
\end{problem} 
\begin{remark} 
     Reachability analysis enables verification, as one could verify a safety property using intersection or union with the reachable sets. 
\end{remark}

\section{Background}
\label{sec:background}

The reachability approach described in this paper uses TMs as an approximation method~\cite{makino03}. This section provides the relevant background on TMs as it pertains to reachability analysis of dynamical systems.

At a high level, a TM for a function $f$ is a polynomial approximation of $f$ as well as worst-case error bounds over a domain of interest $D$. In particular, we say a polynomial~$p$ of degree $j$ is a $j$-degree polynomial approximation of $f$, around a point $x$, written $p(x) \equiv_j f(x)$, if all partial derivatives, up to order $j$, of $f$ and $p$ coincide at $x$. Furthermore, let~$\mathbb{I}$ be the set of all closed intervals $I = [a,b]$, with $a, b \in \mathbb{R}$. Then, a TM is defined as follows.

\begin{definition}[Taylor Model]
\label{def:tm}
Let $f: D \rightarrow \mathbb{R}$ be a $j$-times differentiable function of $n$
variables defined over a domain~$D \in \mathbb{I}^n$. Then a Taylor
model of $f$ over $D$ of degree $j$ is a pair $(p,I)$ of a polynomial
approximation $p$ and an error bound $I$ (also known as a remainder)
such that
\begin{align*}
1)&f(c) \equiv_j p(c), \text{where } c \text{ is the center of } D,\\
2)&\forall x \in D,\ f(x) \in \{p(x) + e \mid e \in I\}.
\end{align*}
\end{definition}

\begin{definition}[Taylor Model Range]
We define the range of $TM = (p,I)$ over domain $D$ as ${R(TM) = \{ p(x) + e \mid x \in D, e \in I\}}$.
\end{definition}

Thus, a TM can be thought of as a Taylor series approximation along with bounds on the remainder term. TMs are appealing due to their approximation power (given a high enough polynomial order) as well as their computational scalability on most systems considered in the literature. In particular, prior work~\cite{makino03} has introduced standard TM operations, including addition and multiplication, that allow one to perform arithmetic operations while preserving the conservative approximation of the resulting~TM.

\begin{definition}[Taylor Model Arithmetic]
\label{def:tma}
Consider two TMs, ${TM_1 = (p_1, I_1)}$ and ${TM_2 = (p_2, I_2)}$, defined
over a domain $D$. TM addition and multiplication are defined as
follows~\cite{makino03,chen13}:
\begin{align*}
TM_1 + TM_2 &= (p_1 + p_2, I_1 + I_2)\\
TM_1 \times TM_2 &= (p_1 \times p_2, I_M),
\end{align*}
where $I_M = \texttt{Int}(p_1)I_2 + \texttt{Int}(p_2)I_1 + I_1 \times I_2$, and $\texttt{Int}(p)$ is an interval bound of $p$ over $D$.
\end{definition}

The benefit of Definition~\ref{def:tma} is that TMs can now be used to conservatively approximate the reachable sets of dynamical systems such as the one in~\eqref{eq:system}, through repeated application of TM arithmetic. Furthermore, TMs are readily available for a number of analytic functions, such as exponentials and trigonometric functions~\cite{makino03}, which allows us to calculate reachable sets for a variety of systems encountered in practice. Finally, TMs can also be used for reachability analysis of continuous-time systems, e.g., through Picard iteration~\cite{chen13}, as well as for hybrid systems, e.g., by calculating intersections of reachable sets with guards and invariants using the concept of domain contraction~\cite{chen13}.

\textbf{TMs for dynamical systems.} For reachability analysis of dynamical systems such as the one in~\eqref{eq:system}, TMs are typically defined in terms of the initial conditions. Specifically, each polynomial is a function of the \emph{initial states} $x_0$ over domain $\mathcal{X}_0$. Thus, the reachable set at time step $k$ is represented as a vector of TMs, where each polynomial (of order $j$) has the form:
\begin{equation*}
    p_{x_{i,k}} = c + \sum_{n=1}^N c_{i,n} x_{1,0}^{q_{1n}}\cdots x_{d,0}^{q_{dn}},
\end{equation*}
where $N$ is the number of all monomials of order up to $j$, the $c$'s are constants, the $q_{mn} \in \{0, \dots, j\}$ are power terms satisfying $\sum_m q_{mn} \le j$ and $x_{i,0}$ denotes element $i$ of state vector $x_0$. Note that the domain is typically normalized to be $D = [-1,1]^d$, as that tends to result in better approximation performance~\cite{makino03}.

\textbf{TMs for non-polynomial functions.} For functions composed of multiple non-polynomial functions, e.g., $1/\cos^2(\theta)$, the standard approach is to start with TMs of known functions, e.g., cosine and division, and apply TM arithmetic to obtain a TM for the full function. Note that this approach may result in large approximation error as remainder terms are propagated through multiple TMs. We investigate methods to alleviate this challenge in Section~\ref{sec:technical}.

\section{Running Examples}
\label{sec:running_examples}
In this section we present two different dynamical systems that we use as running examples.
While the dynamics of both models are continuous, the control is time-triggered which allows us to treat them as discrete-time systems. 
This is possible since existing reachability tools can provide reachable sets at the time-triggered control steps.
Both systems have time-triggered control of 10Hz and the controllers are designed with the high-level objective to be within a small box of the origin. We aim to verify this by approximating the reachable sets up to five seconds of system execution time.
\subsection{Planar Quadcopter}
Consider a simplification of the common quadrotor model in two dimensions with a rigid body, named the planar quadcopter \cite[Chapter~3]{underactuated}. We modify the original model to also include saturation on controls as follows:
\vspace{-5pt}
\begin{align}
m\ddot{p}^h_{k} &= -F(\tanh(u^1_k) + \tanh(u^2_k))\sin\theta_k\\
m\ddot{p}^v_{k} &= F(\tanh(u^1_k) + \tanh(u^2_k))\cos\theta_k - mg\\
I\ddot{\theta}_k &= rF(\tanh(u^1_k)-\tanh(u^2_k))
\vspace{-5pt}
\end{align}
where \(m=0.486\unit{\kg}\) is the craft's mass, \(F=9.81\unit{N}\) is the maximum thrust of one propeller, \(g=9.81\unit{\m/\s^2}\) is the gravitational acceleration, \(I=0.00383\unit{\kg\cdot\m^2}\) is the moment of inertia, and \(r=0.25\unit{m}\) is the length
from a propeller to the axis of rotation.
\(p_k^h,p^v_k,\theta_k\) are the horizontal and vertical positions in space and the rotation.
Succinctly, we define the state and control at time \(k\) to be
\({
{x}_k=\begin{bmatrix}
    p^h &
    p^v &
    \theta &
    \dot{p}^h &
    \dot{p}^v &
    \dot{\theta}
\end{bmatrix}^\intercal,{u}_k=\begin{bmatrix}
    u^1&u^2
\end{bmatrix}^\intercal}
\).
  The predicted states are propagated in the cost function using an Euler approximation with a step size of \(0.1s\).
  
We use five gradient steps, and a decaying learning rate \(\alpha_i=0.095(0.51^i)\) for each gradient descent iterate \(i\in[0,4]\).
The finite cost-function horizon is \(H=4\). For the state and control costs in the cost \eqref{eq:lqr_cost}, set \({\mathcal{Q} = \mathop{\mathrm{diag}} ([
    \num{1.45e-01}, \num{1.95e+00}, \num{5e-03} ,\num{7.8e-03} ,\num{1.15e-03}, \num{1e-04}])}
\).
and \({\mathcal{R} = [\num{1.2e-04},\;\num{6.0e-05};\;
 \num{6.0e-05},\;\num{1.2e-04}]}\).
 The cost parameters were obtained through manual inspection so as to achieve the objective with minimum horizon length.

 To demonstrate the complexity of the gradients, for illustrative purposes, consider the first control in \(u_k\) for this problem. We write its gradient as a summation
 \vspace{-5pt}
 \begin{equation}
 \label{eq:sum_monomials_quadcopter}
 \vspace{-5pt}
     \frac{\partial{J}}{\partial u_{1,k}} = \sum_{i=1}^{75}g_{i}(u_k,x_k)
 \end{equation}
where each \(g_i(\cdot)\) is a non-linear term. The leading term \(g_1(\cdot)\) for \(u_{1,k}\) is
\({g_1(u_k,x_k) = C x_{k+1}\sin(\theta_k)\tanh({u_{k,1}})^2}\)
where \(C\) is a constant term consisting of a subset of the model parameters.
 In \(g_1(\cdot)\), there is a product of transcendental functions and the predicted states, which creates significant challenges for reachability analysis; other \(g_i(\cdot)\) terms can be similarly nonlinear and have even higher power.
\subsection{Cartpole}
The second case study is the cartpole system \cite{barto_neuronlike_1983}, consisting of a 
pole hinged to a cart on a frictionless track, with the goal of balancing the pole upright.
via horizontal forces. We use the corrected frictionless dynamics of Florian  
\cite{florian2007correct}, with state $x = [p,\ \dot{p},\ \theta,\ \dot{\theta}]^\top$ 
where $p$ is the cart position in meters, $\theta$ is the pole angle from vertical in radians, and $u$ is the applied horizontal force in Newtons.

Letting $\beta = (Fu + m_p l \dot{\theta}^2 \sin\theta)\,/\,(m_c + m_p)$, the dynamics are 
$\dot{p} = \dot{p}$, $\dot{\theta} = \dot{\theta}$,
$\ddot{p} = (\beta - m_p l \ddot{\theta} \cos\theta)/(m_c + m_p)$, and
\vspace{-5pt}
\begin{equation}
    \ddot{\theta} = \frac{g \sin\theta - \beta\cos\theta}{l \left( \frac{4}{3} - \frac{m_p \cos^2\theta}{m_c + m_p} \right)}
\vspace{-5pt}
\end{equation}
where \(m_c=1\unit{kg}\) is the cart mass, \(m_p=0.1\unit{kg}\) is the pole mass, \(l=0.5\unit{m}\) is the half-pole length, \(g=9.8\unit{m/\s^2}\) is gravitational acceleration, and \(F=10\) is the force scalar.

The cost function design follows the previous example. We use eight gradient steps and a decaying learning rate \({\alpha_i=0.01(0.8^i)}\) for each gradient step \({i \in [0,7]}\). The cost function horizon is \({H=6}\), using an Euler step of 0.2s. For the state costs,
\({\mathcal{Q}_k=1.1765^k \text{diag}([1.17647, 0, 1, 0])}\) for each predicted state in the cost function
{\(k \in [0, 5]\)} and {\(\mathcal{R}=[0.00075]\)}. All parameters were found using grid search over the parameter space.

Note that cartpole is more challenging to verify than the quadcopter, as it requires planning over a longer horizon to reach the origin and also has more complex dynamics. Again, consider the first control in \(u_k\) as a summation
\begin{equation}\label{eq:sum_monomials_cartpole}
\frac{\partial J}{\partial u_{1,k}} = 
    \left(\sum_{i=1}^{31158} n_i(u_k, x_k)\right) \bigg/ \left(\sum_{i=1}^{486} d_i(u_k, x_k)\right)\end{equation}
where each \(n_i(\cdot)\) and \(d_i(\cdot)\) are non-linear terms. The leading terms, \(n_1(\cdot)\) and \(d_1(\cdot)\) for \(u_{1,k}\) are
\vspace{-5px}
\begin{align}
\label{eq:cartpole_grad1}
   n_1(u_k, x_k) &= C_1 p_{k+6} u_{k+1} u_{k+3} \sin(\theta_{k+2}) \sin(\theta_{k+4}) \notag \\
                 &\quad \cdot \cos(\theta_{k+1}) \cos(\theta_{k+4}) \notag\\
   d_1(u_k, x_k) &= C_2 \cos(\theta_{k+1})^2
\vspace{-5px}
\end{align}
where \(C_1\) and \(C_2\) are constant terms consisting of the model parameters.
In \eqref{eq:cartpole_grad1}, the numerator and denominator each consist of products of predicted states, future controls, and transcendental functions.
The significant increase in terms and their complexity in comparison to the previous example reflects the compounding effect of a longer horizons and complicated dynamics.

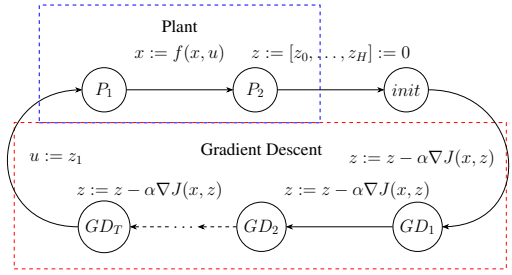
\begin{figure}
    \centering
    \begin{tikzpicture}[>=Stealth, node distance=2.5cm and 3cm, inner/.style={circle, draw, minimum size=1.25cm, font=\Large}]]
    % --- Main Locations ---
    % Anchor top-left coordinate for the entire structure
    \coordinate (topleft) at (0,0);

    % --- Plant Box (Top Row) ---
    % Define relative spacing and position for inner nodes
    \node[inner] (x)  {\(P_1\)};
    \node[inner, right=of x] (xp) {\(P_2\)};
    % Place the Plant title midway above the nodes
    \path (x) -- (xp) node[midway, yshift=1.8cm] (plantTitle) {Plant};
    % Define the dashed blue bounding box
    \node[draw=blue, dashed, inner xsep=1.2cm, inner ysep=0.3cm, fit=(x) (xp) (plantTitle)] (plantBox) {};

    % --- GradientDescent Box (Bottom Row) ---
    % Space the entire bottom box significantly below the top one
    \node[inner, below= of x] (zK)  {\(GD_T\)};
    % \node[right=of zK] (dots){\(\dots\)};
    \node[inner, right=of zK] (z2) {\(GD_2\)};
    \path (zK) -- (z2) node[midway] (dots) {\(\dots\)};
    \node[inner, right=of z2] (z1) {\(GD_1\)};
    \node[inner, right=of xp] (init) {\(init\)};
    % Place the GradientDescent title midway above the nodes
    \path (zK) -- (z1) node[midway, yshift=2.2cm] (gdTitle) {Gradient Descent};
    % Define the dashed red bounding box
    \node[draw=red, dashed, inner xsep=1.8cm, inner ysep=0.45cm, fit=(zK) (z1) (gdTitle)] (gdBox) {};

    % --- Internal Connections ---
    % Connections inside Plant box
    \path[->] (x) edge node[above, yshift=0.65cm] {\(x := f(x, u)\)} (xp);
    \path[->] (xp) edge node[above, yshift=0.65cm] {\(z := [z_0, \dots, z_H]:=0\)} (init);
    
    % Connections inside GradientDescent box
    \path[->] (z1) edge node[above, xshift=0.5cm, yshift=+0.65cm] {\(z := z - \alpha \nabla J(x, z)\)} (z2);

    % \path[->, dashed] (z2) edge node[above, yshift=0.7cm] {\(z := z - \alpha \nabla J(x, z)\)} (zK);
    \path[->, dashed] (z2) edge node[above, yshift=0.7cm] {} (dots);
    \path[->, dashed] (dots) edge node[above, yshift=0.65cm] {\(z := z - \alpha \nabla J(x, z)\)} (zK);
    % \path (zK) -- (z2) node[midway, yshift=+0.8cm] (unrolled) {};
    % --- External Curved Connections ---
    % Right side curve from xp down to z1
    \path[->] (init) edge[out=0, in=0, looseness=1.8]
        node[left, pos=0.5, xshift=-0.25cm, align=left] {\(z := z - \alpha \nabla J(x, z)\)}
        (z1);

    % Left side curve from zK up to x
    \path[->] (zK) edge[out=180, in=180, looseness=1.8]
        node[right, pos=0.5, xshift=+0.5cm, align=left] {\(u := z_{1}\)}
        (x);

    % --- Formula Text ---
    % Formula placed relative to the bottom box nodes
    % \node[anchor=north west, yshift=-0.8cm] at (z1.south west) (formula) {\(z_1 \leftarrow z_0 - \alpha \nabla J(x', z_0)\)};

    % --- Outer Border ---
    % After everything is placed, fit an outer node to the entire bounding box and draw it.
    % We use backgrounds library to ensure it's drawn behind everything if needed, 
    % though just a normal fit and draw works here.
    % \begin{scope}[on background layer]
    %     \node[draw, thin, inner sep=1cm, fit=(plantBox) (gdBox) (formula)] (outerBox) {};
    % \end{scope}

\end{tikzpicture}
    \vspace{-5pt}
    \caption{System dynamics as a transition system. The transition system states are shown with circles. The transition system is divided into two parts: the physical plant dynamics and the digital gradient descent. 
    % We initialize gradient descent with the zero-vector.
    }
    \label{fig:transition_system}
    \vspace{-7pt}
\end{figure}

\section{Approach Overview}
This section outlines our
approach to over-approximating the reachable sets of the full system by translating the system into a transition system with an augmented state. We begin by giving a formal description of the augmented system.

\subsection{Augmented Physical-Digital System}
In this section we define the augmented dynamical system's state which we use to compose gradient descent with the plant dynamics.  
To do this we consider the system in two parts: the physical dynamical system and the digital control system.
Intuitively, each gradient descent update could be considered as part of the digital dynamics update. We define the augmented system.

%Now we define the augmented state.
\begin{definition}[Augmented System]
\label{def:augmented_state}
 The augmented system consists of the physical state, \(x_k \in \mathbb{R}^d\), at time step \(k\) and the digital state, \(z_{i}^k\in\mathbb{R}^{c\times H}\), which denotes the matrix of future controls up-to the horizon \(H\) for iterate \(i\) of gradient descent at time \(k\). The gradient descent algorithm produces a sequence of iterates $z_1^k, z_2^k, \dots, z_T^k$. 
The iterates evolve according to the gradient update rule given in \eqref{eq:grad_update}.
% The augmented state at time \(k\) and gradient descent step \(i\) is defined as the tuple \((x_k,z_i^k)\).
\end{definition}
The augmented system allows us to track how gradient descent affects the approximation at each iterate rather than as a single transition black-box. Because the augmented system is a combination of both the physical and digital systems, we use a standard modeling framework for digital system model checking, namely transition systems~\cite{baier2008principles}.

\subsection{Augmented Transition System}
For verification purposes, we map the augmented dynamical system into an equivalent transition system.
We first provide a general definition of transition systems, followed by the augmented transition system considered in this paper.
\begin{definition}[Transition System]
\label{def:trans_sys}
A transition system is a tuple \((S,T)\) where \(S\) is a set of states and \(T\) is the set of transitions where \(T\subseteq S\times S\).
\end{definition}

\begin{definition}[Transition System for Augmented Physical-Digital System]
The transition system for the augmented dynamical system in Definition~\ref{def:augmented_state} is a transition system with states \(S =\mathcal{S} \times \mathcal{X} \times \mathcal{Z}\), where $\mathcal{S}$ are the discrete states \({\mathcal{S} = \{P_1,P_2,init, GD_1,\dots,GD_T\}}\), \(\mathcal{X}\) is the physical state space and \(\mathcal{Z}\) is the digital state space. The transitions are visualized in Figure~\ref{fig:transition_system}.
\end{definition}
The system cycles through a deterministic sequence of states: starting with the plant evolution (\(P_1 \to P_2\)), followed by zeroing of the gradient vector at (\(init\)), and a fixed sequence of gradient steps ($GD_1 \to \dots \to GD_K$) that determine the next control input.

Using this formulation, we can perform reachability analysis directly over the augmented transition system. To do this we use TMs to over-approximate the reachable sets efficiently.

\begin{algorithm}[!t]
\caption{Reachability Analysis of Control Systems with Optimal Controllers}
\label{alg:reachability_full}
\begin{algorithmic}[1]
\INPUT TM vector $TM_0$ for initial set $\mathcal{X}_0$, controller cost function $J$, physical horizon $K$, gradient horizon $T$, gradient step $\alpha$
   \State $TM_{x} \gets TM_0$
   \State $all\_reach\_sets \gets \{TM_x\}$
   \For{\textbf{each} $k$ in $\{1, \dots, K\}$}
   \State //gradient descent loop (Algorithm~\ref{alg:gd})
   \State $TM_{u} \gets reach\_gradient\_descent(TM_x, J, T, \alpha)$
   \State //physical system progress
   \State $TM_f \gets tm\_for\_dynamics(TM_x, TM_u)$
   \State $TM_x \gets TM_f \circ [TM_x, TM_u]$
   \State //physical state shrink wrapping
   \If{$large\_remainder(TM_x)$}
   \State $TM_x \gets shrink\_wrapping(TM_x)$
   \EndIf
   \State $all\_reach\_sets \gets all\_reach\_sets \cup TM_x$
   \EndFor
   \State \textbf{return} $all\_reach\_sets$
\end{algorithmic}
\end{algorithm}

\subsection{Taylor Models for Reachability}
The high-level reachability analysis algorithm is summarized in Algorithm~\ref{alg:reachability_full}.
Intuitively, we approximate each transition with a TM, as follows.
We can obtain a TM, \(TM_f\), for the dynamics using standard TM arithmetic, as described in Section~\ref{sec:background}. Obtaining a TM for the controls (in \(TM_u\)), as the output of gradient descent, is discussed Section~\ref{sec:technical}.

Algorithm~\ref{alg:reachability_full} is a standard reachability iteration for a dynamical system, with the exception that the controls are the output of a gradient descent algorithm, which requires the novel algorithm for reachability analysis of gradient descent.

Using these TMs, we can conservatively approximate the full optimization-in-the-loop system. While standard TM arithmetic is sufficient to approximate the physical plant dynamics \(TM_f\), the digital propagation of \(TM_u\) through gradient descent presents some considerable challenges.

\subsection{Technical Challenges}
While formulating the problem as an augmented transition system and approximating conservative bounds using TMs is intuitively appealing, finding the reachable sets is very challenging due to the coupled nature of the augmented system.
We discuss here three challenges that arise with verification over gradient descent.

\begin{itemize}
    \item \textbf{Dimensionality}: the augmented system has a much higher state dimension, proportional to the horizon of the cost function, than the physical system. This presents significant challenges for reachability methods, which often suffer from the curse of dimensionality.
    
    \item \textbf{Horizon length}:
    the augmented system has a significantly longer horizon than the original system -- the horizon is essentially the product of the physical system and the digital system horizons. In turn, longer horizons lead to compounding approximation error.
    
    \item \textbf{Gradient complexity}: computing the TM of the cost gradient may require repeated applications of TM  arithmetic to obtain the full function as evidenced in~\eqref{eq:sum_monomials_quadcopter} and~\eqref{eq:sum_monomials_cartpole}. This leads to a poor approximation due to growing remainder bounds.
\end{itemize}

In the following section we propose a TM-based reachability approach that alleviates these challenges, through using digital shrink wrapping and symbolic remainders.

\begin{figure}[t!]
  \centering
\includegraphics[width=0.7\linewidth]{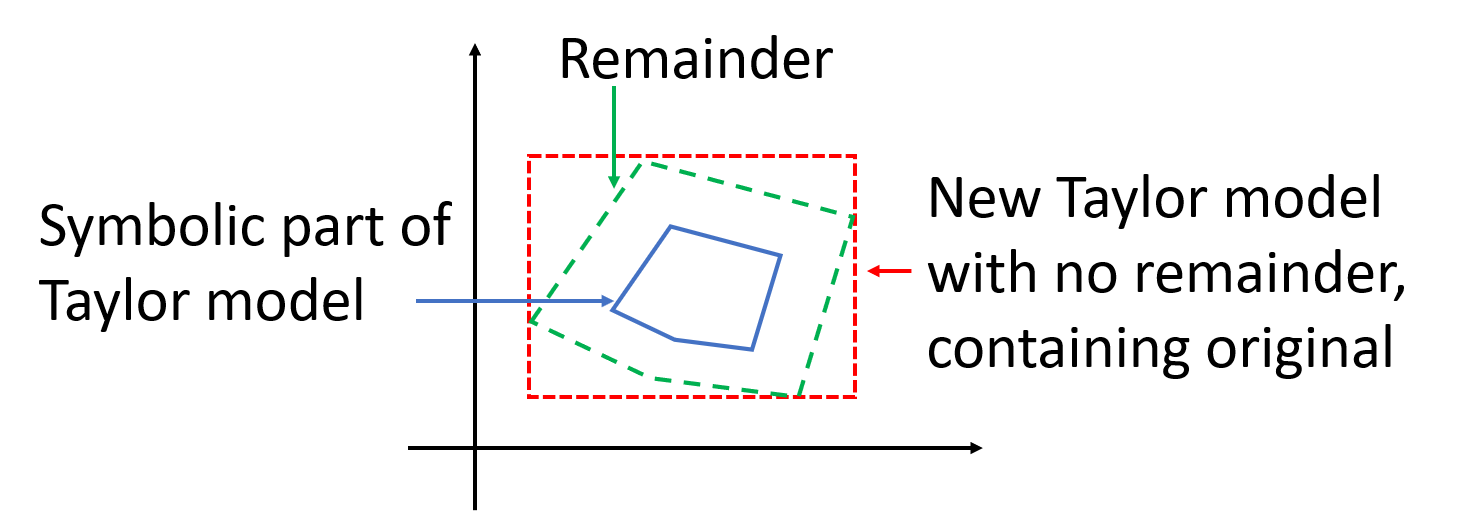}
\vspace{-5px}
  \caption{Illustration of shrink wrapping.}
  \label{fig:shrink_wrapping}  
  \vspace{-5px}
\end{figure}

\section{Reachability Analysis of Gradient Descent using Taylor Models}
\label{sec:technical}

This section presents the proposed approach for reachability analysis of dynamical systems with optimal controllers using TMs. As noted in the previous section, gradient descent poses significant challenges in terms of dimensionality, horizon length, and gradient complexity. This section describes the proposed techniques to alleviate these challenges, namely digital shrink wrapping and symbolic remainder propagation, as summarized in Algorithm~\ref{alg:gd}.

\begin{algorithm}[!t]
\caption{Reachability Analysis of Gradient Descent}
\label{alg:gd}
\begin{algorithmic}[1]
\INPUT TM vector $TM_x$ for physical states, controller cost function $J$, gradient horizon $T$, gradient step $\alpha$
    \State //initialize gradient descent variables to 0
    \State $TM_{z} \gets 0$
   \For{\textbf{each} $t$ in $\{1, \dots, T\}$}
   \State //TMs for predicted states $x_{k+i}$ in $J$
    \State $TM_{p} \gets tms\_predicted(J, TM_{x}, TM_{z})$
    \If{$large\_remainder(TM_{p})$}
   \State $TM_{p} \gets symbolic\_remainders(TM_{p})$
   \EndIf
    \State //Stack all TMs in augmented system state
    \State $TM_{full} \gets [TM_x, TM_{z}, TM_{p}$]
    \State //TM approximation of $\nabla J$
    \State $TM_{\nabla J} \gets tm\_for\_cost(TM_{full})$
   \State //Gradient update
   \State $TM_{z} \gets TM_{z} - \alpha TM_{\nabla J} \circ TM_{full}$
   \State //Digital shrink wrapping (and unwrapping)
   \If{$large\_remainder(TM_{z})$}
   \State $TM_{z} \gets shrink\_unwrapping(TM_{z})$
   \State $TM_{z} \gets digital\_shrink\_wrapping(TM_{z})$
   \EndIf
   \EndFor
   \State //Collect the first $c$ elements of gradient vector (i.e., $u_k$), to be applied (unwrapped) as control input to the plant
   \State $TM_u \gets shrink\_unwrapping(TM_{z,c})$
   \State \textbf{return} $TM_u$
\end{algorithmic}
\end{algorithm}

\subsection{Shrink Wrapping for Gradient Descent}
Shrink wrapping is a popular tool for reachability analysis using TMs as it provides great benefits for long-horizon problems~\cite{ivanov2021verisig,makino2005suppression,bunger2018shrink}. Shrink wrapping aims to slow down TM remainder growth: keeping the remainders small is essential since if the remainders grow large, then TM arithmetic devolves into interval analysis, and the approximation error begins to accumulate rapidly. As illustrated in Figure~\ref{fig:shrink_wrapping}, shrink wrapping works by replacing the large-remainder TM vector with a new TM vector that is a super set of the original one, but with no remainder. Although the new TM vector may temporarily result in increased approximation error, in the long run the error is reduced significantly since reachable sets are once again propagated symbolically.

In the literature, shrink wrapping approaches range from simple box approximations~\cite{ivanov2021verisig} (as shown in Figure~\ref{fig:shrink_wrapping}) to more sophisticated methods that obtain conservative new polynomials through bounding the Jacobian of the original TM~\cite{makino03,bunger2018shrink}. Unfortunately, existing techniques suffer from the curse of dimensionality -- the approximation error in the new TM vector increases dramatically for high-dimensional systems. Since the number of states in the augmented physical-digital system grows rapidly with the horizon length, effective reachability analysis now requires a new approach to alleviate the curse of dimensionality.

In this paper, we propose a modified shrink wrapping approach that performs separate shrink wrapping for physical and digital states. This approach not only alleviates the curse of dimensionality but it also results in faster computation as the plant polynomials only contain the physical states as variables, as opposed to the entire augmented vector of states. Despite its apparent simplicity, this approach requires careful consideration of the physical/digital TMs, since we are only refactoring part of the full TM vector at each time.

For simplicity, for the remainder of this paper we will assume shrink wrapping performs box shrink wrapping (as illustrated in Figure~\ref{fig:shrink_wrapping}), though we note that the proposed approach can be used with other existing methods as well.

\begin{definition}[Shrink Wrapping]
\label{def:box_shrink_wrapping}
Consider a TM vector, ${TMV = [TM_1, \dots, TM_d]}$, defined over variables\footnote{Although, as explained in Section~\ref{sec:background}, the TMs are defined over the states at time 0, $x_{i,0}$, for simplicity we drop the 0 subscript in this section.} $x_{1}, \dots, x_{d}$ and over domain $D = [-1,1]^d$. The TM vector ${TMV^{sw} = [TM_1^{sw}, \dots, TM_d^{sw}]}$ is a shrink wrapping of $TMV$ if
\vspace{-5px}
\begin{align*}
&\text{(i) } TMV^{sw} \text{ has no remainder, i.e., } TM_i^{sw} = (p_i^{sw}, 0), \forall i;\\
&\text{(ii) } R(TM_i) \subseteq R(TM_i^{sw}), \forall i.
\end{align*}
\end{definition}

\begin{proposition}[Box Shrink Wrapping~\cite{ivanov2021verisig}]
Consider a TM vector, $TMV = [TM_1, \dots, TM_d]$, defined over variables $x_{1}, \dots, x_{d}$ and over domain $D = [-1,1]^d$. Suppose $\texttt{Int}(TM_i) = [a_i, b_i]$ is an interval bound of each $TM_i$ over $D$. The TM vector ${TMV^{INT} = [TM_1^{INT}, \dots, TM_d^{INT}]}$ where $TM_i^{INT} = (p_i, 0)$, such that
\begin{align*}
    p_i = \frac{b_i + a_i}{2} + \frac{b_i - a_i}{2}x_{i},
\end{align*}
is a box shrink wrapping of $TMV$.
\end{proposition}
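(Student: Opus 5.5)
The plan is to check the two conditions of Definition~\ref{def:box_shrink_wrapping} directly for $TMV^{INT}$. Condition (i) holds by construction, since each $TM_i^{INT}$ is defined as $(p_i, 0)$ and so carries no remainder. The work is therefore entirely in condition (ii), namely $R(TM_i) \subseteq R(TM_i^{INT})$ for every $i$.

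For (ii), I will compute both ranges separately and compare them. First, by the definition of an interval bound, $\texttt{Int}(TM_i) = [a_i,b_i]$ means that every value $p(x)+e$ with $x \in D$ and $e$ in the remainder of $TM_i$ lies in $[a_i,b_i]$. Hence $R(TM_i) \subseteq [a_i,b_i]$. Second, I will compute $R(TM_i^{INT})$ exactly. Because the remainder is zero, this range is just the image of $p_i$ over $D=[-1,1]^d$. The polynomial $p_i$ is affine and depends only on the single variable $x_i$, which ranges over $[-1,1]$. If $b_i > a_i$, the slope $\frac{b_i-a_i}{2}$ is positive, so $p_i$ is increasing in $x_i$ with $p_i(-1)=a_i$ and $p_i(1)=b_i$. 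By continuity (intermediate value theorem), the image is exactly $[a_i,b_i]$. In the degenerate case $a_i=b_i$, $p_i$ is the constant $a_i$ and the image is $\{a_i\}=[a_i,b_i]$. Chaining the two facts gives $R(TM_i)\subseteq[a_i,b_i]=R(TM_i^{INT})$.

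The only step needing care is the first inclusion. It requires reading ``interval bound of $TM_i$'' as bounding the full range $p+I$, not merely the polynomial part $p$. I will state this interpretation explicitly, consistent with how $\texttt{Int}$ is used for shrink wrapping in Figure~\ref{fig:shrink_wrapping}. With the remainder included, the inclusion is immediate from the definition of $R$.

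Finally, I will remark that the new TMs use the fresh normalized variables $x_i \in [-1,1]$, one per component. The result is therefore a valid TM vector over the same domain $D$, so the box shrink wrapping is well defined.
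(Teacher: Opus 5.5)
Your proof is correct and uses essentially the argument the paper relies on. The paper states this proposition with a citation and no proof, but its proof of the digital analogue (Box Digital Shrink Wrapping) argues the same way: (i) holds by construction, and (ii) holds because $\frac{b_i+a_i}{2}+\frac{b_i-a_i}{2}x_i$ sweeps exactly $[a_i,b_i]$ as $x_i$ ranges over $[-1,1]$; your reading of $\texttt{Int}(TM_i)$ as a bound on the full range of $TM_i$, remainder included, is the right one and is what gives $R(TM_i)\subseteq[a_i,b_i]$.
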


Intuitively, box shrink wrapping replaces all TMs with intervals, where each interval is represented as a linear function of the corresponding initial state dimension. As noted above, box shrink wrapping is only beneficial for low-dimensional systems as it tends to introduce significant error in high dimensions. To avoid performing box shrink wrapping on the augmented system, we propose to perform separate shrink wrapping for the physical and digital systems. At the same time, this step requires resolving the interdependencies between these subsystems, as explained next.

\begin{definition}[Digital Shrink Wrapping]
\label{def:digital_shrink_wrapping}
Consider a \emph{digital} TM vector, $TMV = [TM_1, \dots, TM_{c\times H}]$, defined over physical variables $x_{1}, \dots, x_{d}$ and over domain ${D = [-1,1]^d}$. The TM vector ${TMV^{dsw} = [TM_1^{dsw}, \dots, TM_{c\times H}^{dsw}]}$ is a digital shrink wrapping of $TMV$ if
\vspace{-5px}
\begin{align*}
&\text{(i) } TMV^{dsw} \text{ has no remainder, i.e., } TM_i^{dsw} = (p_i^{dsw}, 0), \forall i;\\
&\text{(ii) } R(TM_i) \subseteq R(TM_i^{dsw}), \forall i;\\
&\text{(iii) each } p_i^{dsw} \text{ may also contain variables } z_{1}, \dots, z_{c\times{H}}.
\end{align*}
\end{definition}

Note that digital shrink wrapping means we can use digital variables in the polynomials as well. This is needed because if the digital TMs are only functions of physical states, then we cannot shrink wrap the digital TMs without also shrink wrapping the physical TMs, as in doing so we would lose the dependencies between their corresponding reachable sets. Given this definition, we can now perform box shrink wrapping in a similar fashion to Definition~\ref{def:box_shrink_wrapping}.

\begin{proposition}[Box Digital Shrink Wrapping]
Consider a digital TM vector, $TMV = [TM_1, \dots, TM_{c\times H}]$, defined over variables $x_{1}, \dots, x_{d}$ and over domain ${D = [-1,1]^d}$, with $TM_i = (p_i, I_i)$, with $I_i = [a_i, b_i]$. The TM vector ${TMV^{dINT} = [TM_1^{dINT}, \dots, TM_{c\times H}^{dINT}]}$ where each ${TM_i^{dINT} = (p_i^{dINT},0)}$, such that
\vspace{-0pt}
\begin{align*}
    p_i^{dINT} = p_i + \frac{b_i + a_i}{2} + \frac{b_i - a_i}{2}z_i,
\end{align*}
is a digital shrink wrapping of $TMV$.
\end{proposition}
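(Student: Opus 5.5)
To prove the statement we check the three conditions of Definition~\ref{def:digital_shrink_wrapping} for $TMV^{dINT}$ one at a time. Conditions (i) and (iii) follow from the construction; all the content is in the range inclusion (ii).

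Condition (i) holds by definition, since each $TM_i^{dINT}$ is the pair $(p_i^{dINT},0)$. Condition (iii) also holds by inspection. The polynomial $p_i^{dINT}$ is the original polynomial $p_i$ in the physical variables $x_1,\dots,x_d$, plus a constant, plus a linear term in the single digital variable $z_i$. So it uses only physical variables and the permitted variables $z_1,\dots,z_{c\times H}$.

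For (ii), we read the range of the shrink-wrapped TM over the enlarged domain $D\times[-1,1]^{c\times H}$, with each new digital variable $z_i$ ranging over the normalized interval $[-1,1]$. This is the same normalization convention as in the Box Shrink Wrapping proposition.

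\begin{itemize}
\item Fix any element of $R(TM_i)$. By the definition of TM range it has the form $p_i(x)+e$ with $x\in D$ and $e\in[a_i,b_i]$.
\item If $b_i=a_i$, then $e=a_i$, and we may take $z_i=0$ (or any value): $p_i^{dINT}(x,z)=p_i(x)+a_i$.
\item Otherwise set $z_i = \frac{2e-(a_i+b_i)}{b_i-a_i}$. Since $e\in[a_i,b_i]$, this gives $z_i\in[-1,1]$, and it satisfies $\frac{b_i+a_i}{2}+\frac{b_i-a_i}{2}z_i=e$.
\item Choose the remaining digital variables $z_j$, $j\neq i$, arbitrarily in $[-1,1]$; they do not appear in $p_i^{dINT}$.
\end{itemize}

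Then $p_i^{dINT}(x,z)=p_i(x)+e$, so the chosen element lies in $R(TM_i^{dINT})$. This gives $R(TM_i)\subseteq R(TM_i^{dINT})$. In fact equality holds, because the affine map $z_i\mapsto \frac{b_i+a_i}{2}+\frac{b_i-a_i}{2}z_i$ sends $[-1,1]$ onto $[a_i,b_i]$.

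The one delicate point is the domain convention for the new variables: the range in condition (ii) must be taken over $D\times[-1,1]^{c\times H}$, not over $D$ alone. I would make this explicit. I would also note that each remainder gets its own fresh variable $z_i$, and the physical variables $x$ are shared between $p_i$ and $p_i^{dINT}$. As a result, the componentwise inclusion holds with no extra error: none of the dependency on the physical state is lost, and the only new freedom is the independent $z_i$ that encodes the remainder. This is the property the subsequent unwrapping step relies on.
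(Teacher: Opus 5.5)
Your proof is correct and follows the paper's route: (i) and (iii) hold by construction, and (ii) holds because the affine map $z_i \mapsto \frac{b_i+a_i}{2}+\frac{b_i-a_i}{2}z_i$ sends $[-1,1]$ onto the original remainder $[a_i,b_i]$. Your explicit preimage $z_i=\frac{2e-(a_i+b_i)}{b_i-a_i}$, the degenerate case $a_i=b_i$, and the remark that the range must be taken over $D\times[-1,1]^{c\times H}$ spell out what the paper leaves implicit.
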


\begin{proof}
    Parts (i) and (iii) follow from the definition of ${TMV^{dINT}}$. To show part (ii), notice that the terms $(b_i + a_i)/2 + (b_i - a_i)z_{i}/2$ range between $[a_i, b_i]$ for $z_{i} \in [-1,1]$, which is exactly the range of the original remainder of $TM_i$. Thus, the range of $p_i^{dINT}$ is the same as the range $TM_i$.\qed
\end{proof}

Intuitively, box digital shrink wrapping moves each digital TM's remainder into a \emph{new} digital variable $z_i$. This means that in order to maintain the separation between physical and digital TMs, the TM for the output of gradient descent (i.e., the control TM), has to contain only physical variables. Thus, we propose the new concept of \emph{shrink unwrapping}, i.e., conservatively removing a variable from a TM.

\begin{definition}[Shrink Unwrapping]
Consider an augmented TM vector, ${TMV = [TM_1, \dots, TM_N]}$, defined over both physical and digital variables $x_{1}, \dots, x_{d}, z_{1}, \dots, z_{c\times H}$ and over domain ${D = [-1,1]^{d + c\times H}}$. The TM vector ${TMV^{su} = [TM_1^{su}, \dots, TM_N^{su}]}$, with $TM_i^{su} = (p_i^{su}, I_i^{su})$, is a shrink unwrapping of $TMV$ if
\vspace{-5px}
\begin{align*}
&\text{(i) } R(TM_i) \subseteq R(TM_i^{su}), \forall i;\\
&\text{(ii) each } p_i^{su} \text{ only contains variables } x_{1}, \dots, x_{d}.
\end{align*}
\end{definition}
\vspace{-5px}
As a first step, we propose a conservative shrink unwrapping method by bounding all monomial terms that depend on the digital variables and moving them into the remainder, as shown in the following proposition.

\begin{figure*}[t!]
    \centering
     \begin{subfigure}{0.49\textwidth}
 \centering
    \includegraphics[width=\linewidth]{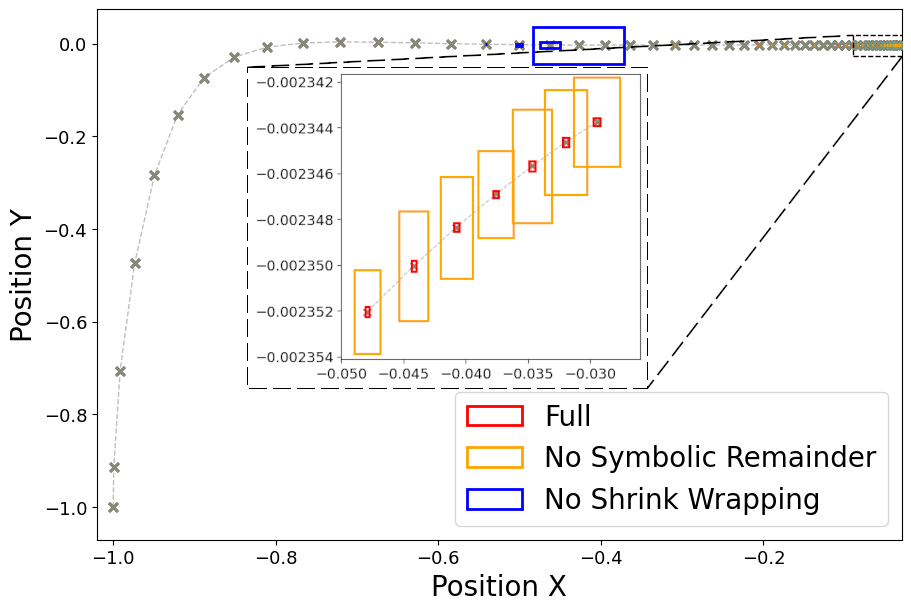}
     \caption{Planar Quadcopter Reachable Sets}
     \end{subfigure}
     \hfill
     \begin{subfigure}{0.49\textwidth}
    \centering
    \includegraphics[width=\linewidth]{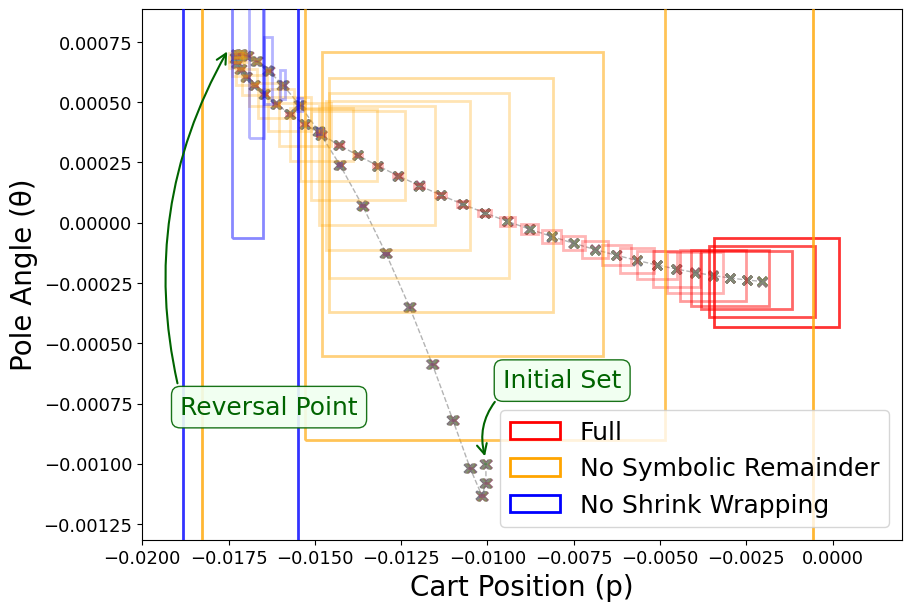}
    \caption{Cartpole Reachable Sets}
     \end{subfigure}
     \caption{Reachability analysis of the running examples. Reachable sets are shown in colored boxes. Simulation steps are plotted with X connected with a dashed gray line.
     In the planar quadcopter, the system starts in the bottom left and goes towards the origin. In cartpole, the system begins in the indicated initial set and first balances the pole before moving towards the origin. For clarity, later reachable sets in the cartpole figure have higher opacity. In both examples, using both of the techniques discussed greatly reduces the reachable set approximation error.  
     }
     \label{fig:reachable_sets}
     \vspace{-15pt}
\end{figure*}

\begin{proposition}[Box Shrink Unwrapping]
    Consider an augmented TM vector, $TMV = [TM_1, \dots, TM_N]$, defined over both physical and digital variables $x_{1}, \dots, x_{d}, z_{1}, \dots, z_{c\times H}$ and over domain ${D = [-1,1]^{d + c\times H}}$. Denote each TM as ${TM_i = (p_i, I_i)}$, where $p_i$ is decomposed as $p_i = p_i^x + p_i^{z}$, where $p_i^x$ is a function of only physical variables and $p_i^{z}$ is a function of both physical and digital variables. The TM vector ${TMV^{bsu} = [TM_1^{bsu}, \dots, TM_N^{bsu}]}$ where each $TM_i^{bsu} = (p_i^{bsu}, I_i^{bsu})$, such that
    \begin{align*}
    p_i^{bsu} &= p_i^x & 
    I_i^{bsu} &= I_i + \texttt{Int}(p_i^{z}),
\end{align*}
is a digital shrink wrapping of $TMV$.
\end{proposition}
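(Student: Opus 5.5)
Although the statement calls $TMV^{bsu}$ a ``digital shrink wrapping'', what it actually provides is a shrink unwrapping, and that is what we check. The plan is to verify the two conditions of the Shrink Unwrapping definition directly, mirroring the argument used for Box Digital Shrink Wrapping.

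Condition (ii) is immediate from the construction. By the decomposition $p_i = p_i^x + p_i^z$, every monomial containing a digital variable is placed in $p_i^z$. Hence $p_i^{bsu} = p_i^x$ contains only $x_1,\dots,x_d$. It is worth noting that, in this reading, any monomial involving a digital variable goes entirely into $p_i^z$, even if it also involves physical variables, so $p_i^x$ is purely physical.

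The main work is the inclusion $R(TM_i) \subseteq R(TM_i^{bsu})$. We take an arbitrary element $p_i(x,z) + e$ with $(x,z) \in [-1,1]^{d+c\times H}$ and $e \in I_i$, and rewrite it as $p_i^x(x) + \big(p_i^z(x,z) + e\big)$. Since $\texttt{Int}(p_i^z)$ is by definition an interval enclosure of $p_i^z$ over the full domain $D$, we have $p_i^z(x,z) \in \texttt{Int}(p_i^z)$. By interval addition, $p_i^z(x,z) + e \in \texttt{Int}(p_i^z) + I_i = I_i^{bsu}$. Moreover, $x$ ranges over the physical projection $[-1,1]^d$ of $D$, which is the domain on which $TM_i^{bsu}$ is evaluated; the polynomial $p_i^{bsu}$ does not depend on $z$, so evaluating it at $x$ is legitimate. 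Therefore the element equals $p_i^{bsu}(x) + e'$ with $e' \in I_i^{bsu}$, which lies in $R(TM_i^{bsu})$.

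The only delicate point is that the enclosure $\texttt{Int}(p_i^z)$ must be valid jointly over both the physical and digital coordinates, not merely for fixed $x$. This holds by the meaning of $\texttt{Int}$ over $D$. Discarding the dependence on $x$ inside $p_i^z$ is exactly where conservatism is introduced, but it does not affect soundness. Beyond this, there is no real obstacle.
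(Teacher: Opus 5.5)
Your proposal is correct and takes the same route as the paper: condition (ii) is immediate from $p_i^{bsu} = p_i^x$, and condition (i) holds because $\texttt{Int}(p_i^z)$ encloses every value of $p_i^z$ over $D$, so $p_i^z(x,z) + e \in I_i + \texttt{Int}(p_i^z)$. You are also right to read the conclusion as a shrink unwrapping rather than a digital shrink wrapping, since the nonzero remainder would violate condition (i) of the latter; the paper's proof implicitly does the same by checking exactly the shrink-unwrapping conditions.
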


\begin{proof}
    Part (ii) follows from the definition of $TMV^{bsu}$. To show part (i), notice that any value of $p_i^{z}$ is contained in $\texttt{Int}(p_i^{z})$, hence $R(TM_i) \subseteq R(TM_i^{bsu})$. \qed
\end{proof}

Thus, by performing both digital shrink wrapping and shrink unwrapping, we can ensure that the physical and digital states can be shrink wrapped separately. Finally, note that shrink unwrapping is also needed in between repeated digital shrink wrappings, to avoid adding a new set of digital variables after each gradient descent iteration.

\subsection{Symbolic Remainders for Predicted States}
In addition to the digital states $z$, we note that the gradient descent iteration also contains the predicted $x_{k+i}$ variables. While not directly modeled as states since they are functions of $x_k$ and $z_i^k$, they also greatly affect the approximation error since they essentially constitute a smaller reachability problem inside each gradient step.

To alleviate the error accumulation caused by predicted states, we propose the notion of symbolic remainders for digital variables. Propagating remainders symbolically is a well-known idea from the TM literature: at a high level, remainders are temporarily represented as variables that are propagated symbolically, thereby reducing the error caused by interval analysis~\cite{chen2015reachability}. In this paper, we build on this idea by augmenting the variable set even further, this time opportunistically including predicted states as described next.

Consider the TM for variable $x_{k+i}$, call it $TM_{px} = (p,I)$. If the magnitude of remainder $I$ exceeds a certain threshold, we reformulate $TM_{px}$ into a new TM by adding a new variable, $x_{k+i}$, whose range is exactly the magnitude of the remainder. While this idea is reminiscent of digital shrink wrapping in Definition~\ref{def:digital_shrink_wrapping}, the main difference is that the predicted variables serve as temporary remainder placeholders in case the remainders grow large. The symbolic remainders are shrink unwrapped from the digital states $z$ as well, so that the newly accumulated remainders can also be propagated symbolically.

\textbf{Implementation.} We implemented the proposed techniques as part of our Verisig tool~\cite{ivanov19}, which is based on the original TM implementation of the Flow* tool~\cite{chen13}. The augmented transition system is implemented as a general hybrid system, with no continuous dynamics in digital modes.

\section{Experiments}\label{sec:experiments}

In this section we demonstrate the reachable set computation for the running examples in Section~\ref{sec:running_examples}.

\textbf{Experiment Setup.}
In the planar quadcopter, we use the initial set \({(p_0^v,p_0^h)\in[-1\pm \num{1e-5} ]\times[-1 \pm \num{1e-5}]}\), with zero initial rotation and velocity. 
In the cartpole, we use the initial set \({p_0 \in [-0.0101, -0.01], \theta_0=-0.001}\), with zero velocity and angular velocity.
For both systems, we simulate 50 trajectories, sampled randomly from the initial set to provide an intuitive under-approximation of the reach sets.
For both examples, we evaluate the performance of the combined techniques of shrink wrapping and symbolic remainders, shrink wrapping without symbolic remainders, and symbolic remainders without shrink wrapping.
We note that because there has been no prior work on this problem, we evaluate different ablations of the proposed techniques in Section~\ref{sec:technical} and compare their importance for finding the reachable sets.
All experiments were run on a single core of an AMD EPYC 7413 24-Core Processor.

We can see in Figure~\ref{fig:reachable_sets}
that the reachable sets when using symbolic remainders and digital shrink wrapping remain the smallest throughout the trajectory. 
The figure also demonstrates the strength of our proposed techniques.
While in the planar quadcopter the reachable sets with no symbolic remainder remain relatively tight, cartpole demonstrates the importance of symbolic remainders in combination with shrink wrapping.
The reachability bounds of the system beyond the reversal point, where the velocity and angular velocity of the cartpole flip signs, diverge without shrink wrapping or symbolic remainders.
To describe the computational requirements of the proposed method, calculating reachable sets took 4237s for the planar quadcopter and 19570s for the cartpole.

From our experiments, it is evident that digital shrink wrapping has an immense effect on how conservative the reach sets become. Without shrink wrapping, verification would not be possible due to overly conservative approximation bounds.
Using symbolic remainders we can further tighten the bounds, even more so in systems with long horizons.
Together, symbolic remainders and digital shrink wrapping consistently produce tighter reachable sets across both systems.%, with shrink wrapping being the decisive factor in verification feasibility.

\section{Discussion and Conclusion}\label{sec:conclusion}
This paper presented a TM-based approach for reachability analysis of systems with optimal controllers using gradient descent. We modeled gradient descent as a digital dynamical system embedded in the physical one. To alleviate the challenges introduced by the increased dimensionality, horizon and complexity of the augmented system, we proposed modifications to existing TM techniques, namely digital shrink wrapping and symbolic propagation of digital remainders. We evaluated our method on two control systems, a two-dimensional quadrotor and a cartpole.

The complexity of this problem has revealed several directions for improvement in verification methodology. First of all, new verification techniques are needed to mitigate the curse of dimensionality, especially over long horizons. While shrink wrapping and symbolic remainders provide substantial benefits for the systems considered in this paper, new approaches are needed for more complex dynamics, such as data-driven models. Furthermore, there is a need for co-design of control and verification -- it is imperative to simplify both the cost function (e.g.,~through shorter horizons or fewer terms) and the optimization algorithm (e.g., through reducing the number of gradient steps and resetting gradient variables at the end of each control loop) so as to keep the approximation error low. Finally, it is important to consider other optimization strategies as well, such as sampling-based control and second-order methods.

\bibliography{root}
\bibliographystyle{IEEEtran}
\end{document}